\numberwithin{equation}{section}
\newcommand{\RR}{\rm I\kern -1.6pt{\rm R}}
\newtheorem{theorem}{Theorem}[section]
\newtheorem{lemma}[theorem]{Lemma}
\newtheorem{corollary}[theorem]{Corollary}
\newcommand{\diag}{\mathop{\mathrm{diag}}}
\newcommand{\R}{\mathcal{R}}
\title{\bf Fast Diffusion Inhibits Disease Outbreaks }
\author{{\bf Daozhou Gao$^{a}$\thanks{Corresponding author. E-mail address: dzgao@shnu.edu.cn}, Chao-Ping Dong$^{a}$}\\
{\small $^{a}$ Mathematics and Science College, Shanghai Normal University, Shanghai, 200234 China}
}
\date{}
\begin{document}
\maketitle

\noindent  {\bf Abstract.}
We show that the basic reproduction number of an SIS patch model with standard incidence is either strictly decreasing and strictly convex with respect to the diffusion coefficient of infected subpopulation if the patch reproduction numbers of at least two patches in isolation are distinct or constant otherwise. Biologically, it means that fast diffusion of the infected people reduces the risk of infection. This completely solves and generalizes a conjecture by Allen et al. ({\it SIAM J Appl Math}, 67: 1283-1309, 2007). Furthermore, a substantially improved lower bound on the multipatch reproduction number, a generalized monotone result on the spectral bound the Jacobian matrix of the model system at the disease-free equilibrium, and the limiting endemic equilibrium are obtained. The approach and results can be applied to a class of epidemic patch models where only one class of infected compartments migrate between patches and one transmission route is involved. \\

\noindent {\bf AMS subject classifications.} 91D25, 34D20, 92D30, 34D05, 15B48, 15A42.\\

\noindent {\bf Key words.} patch model, basic reproduction number, monotonicity, diffusion coefficient, spectral bound, essentially nonnegative matrix.

\section{Introduction}

In 2007, Allen and her collaborators proposed the following SIS epidemic patch model
\begin{equation}\label{sismodel}
\begin{aligned}
\dfrac{dS_i}{dt} &= d_S\sum\limits_{j\in\Omega} L_{ij} S_j-\beta_i\dfrac{S_iI_i}{S_i+I_i}+\gamma_i I_i,\ i\in\Omega,\\
\dfrac{dI_i}{dt} &= d_I\sum\limits_{j\in\Omega} L_{ij} I_j+\beta_i\dfrac{S_iI_i}{S_i+I_i}-\gamma_i I_i,\ i\in\Omega,
\end{aligned}
\end{equation}
where $\Omega=\{1,2,\dots,n\}$ and $n\ge 2$ is the number of patches. The variables $S_i(t)$ and $I_i(t)$ represent the number of susceptible and infected individuals in patch $i$ at time $t$, respectively. The parameters $\beta_i$ and $\gamma_i$ are positive transmission coefficient and recovery rate in patch $i$, respectively; $d_S$ and $d_I$ are positive diffusion coefficients for the susceptible and infected subpopulations, respectively; $L_{ij}$ is a nonnegative constant that denotes the degree of movement from patch $j$ to patch $i$ for $i\neq j$ and $-L_{ii}=\sum^n_{j=1,j\neq i}L_{ji}$ is the degree of movement from patch $i$ to all other patches.

The following three assumptions on the initial condition, the connectivity matrix $L=(L_{ij})$, and the patch reproduction number $\R_0^{(i)}=\beta_i/\gamma_i$ are made:
\begin{itemize}
  \item [(A1)] $S_i(0)\ge0$ and $I_i(0)\ge0$ for $i\in\Omega$, and $\sum_{i\in\Omega} I_i(0)>0$;
  \item [(A2)] $L$ is essentially nonnegative (or called quasi-positive), irreducible, and symmetric;
  \item [(A3)] $H^-=\{i\in\Omega:\R_0^{(i)}<1\}$ and $H^+=\{i\in\Omega:\R_0^{(i)}>1\}$ are nonempty and $H^-\cup H^+=\Omega$.
\end{itemize}
It follows from Theorem 6.4.16 in Berman and Plemmons \cite{berman1994nonnegative} that $L$ has rank $n-1$ and hence the system of linear equations
\[\sum_{j\in\Omega} L_{ij}S_j=0,\ i=1,\dots,n, \mbox{ and } \ \sum_{i\in\Omega} S_i=\sum_{i\in\Omega}(S_i(0)+I_i(0))\]
has a unique positive solution, denoted by $\boldsymbol{S}^0$. Then the model \eqref{sismodel} admits a unique disease-free equilibrium (DFE) $E_0=(\boldsymbol{S}^0,\boldsymbol{0})$. Linearizing the model system \eqref{sismodel} at the DFE gives the new infection and transition matrices
\[F=\diag\{\beta_1,\dots,\beta_n\}\ \mbox{ and }\ V=D-d_IL=\diag\{\gamma_1,\dots,\gamma_n\}-d_IL,\]
where $D=\diag\{\gamma_1,\dots,\gamma_n\}$. Following the recipe of van den Driessche and Watmough \cite{vandendriessche-watmough2002}, the basic reproduction number for model \eqref{sismodel} is defined as the spectral radius of the next generation matrix (Diekmann et al. \cite{diekmann1990definition}) $FV^{-1}$, i.e.,
\[\R_0=\rho(FV^{-1}).\]

Allen et al. \cite{allen2007asymptotic} showed that the DFE is globally asymptotically stable if $\R_0<1$ and there exists a unique endemic equilibrium if $\R_0>1$. Under assumptions (A1)-(A3), two main theorems linked spatial heterogeneity, habitat connectivity and movement rate to disease dynamics are presented. Three open problems are left in their discussion. The first one is to conjecture that the basic reproduction number $\R_0$ is a monotone decreasing function of $d_I$. Biologically speaking, an increase in the diffusion of infected subpopulation can lower the potential for disease transmission. The two-patch case can be easily verified by direct calculation. Nevertheless, when three or more patches are concerned, the expression of $V^{-1}$ is complicated so that a direct proof of the monotonicity is intractable. Recently, Gao \cite{gao2019travel} gave an affirmative answer to the conjecture by using the Perron-Frobenius theorem. The proof strongly relies on the symmetry of connectivity matrix $L$. The main purpose of the present paper is to extend the conjecture to asymmetric $L$ and to seek its application.

The remainder of this paper is organized as follows. In Section 2, based on some profound results on the spectral theory of nonnegative matrices, the basic reproduction number $\R_0$ is shown to be strictly decreasing and strictly convex in $d_I$ even if the connectivity matrix $L$ is asymmetric. Section 3 is devoted to the application of the monotonicity to estimate $\R_0$ and spectral bound of $F-V$. A brief discussion is given at the end.

\section{Monotonicity of $\R_0$}

Throughout this paper, unless otherwise indicated, we assume that:
\begin{itemize}
  \item [{\rm(B1)}] the connectivity matrix $L$ is essentially nonnegative and irreducible;
  \item [{\rm(B2)}] at least two patch reproduction numbers are different, i.e., there exist $i\neq j$ such that $\R_0^{(i)}\neq\R_0^{(j)}$ (otherwise, by Proposition 2.2 in Gao and Ruan \cite{gao2011sis}, the multipatch reproduction number $\R_0$ is constant irrespective of $L$ and $d_I$).
\end{itemize}

Now we provide a simpler proof for the conjecture of Allen et al. \cite{allen2007asymptotic} than that of Gao \cite{gao2019travel}. The single and double prime symbols denote the first and second derivatives, respectively.

\begin{theorem}
For model \eqref{sismodel}, if the connectivity matrix $L$ is symmetric, then the basic reproduction number $\R_0$ is strictly decreasing in $d_I\in[0,\infty)$ and $\R_0'(d_I)<0$ for $d_I\in(0,\infty)$.
\end{theorem}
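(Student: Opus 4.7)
The plan is to cast $\R_0(d_I)$ as a generalized Rayleigh quotient (exploiting that the symmetry of $L$ makes $V(d_I)=D-d_IL$ symmetric), differentiate it via the envelope theorem, and use (B2) to rule out the degenerate case. Since $F$ is a positive diagonal matrix and $V(d_I)=D-d_IL$ is a symmetric nonsingular M-matrix, $V$ is positive definite. Hence $FV^{-1}$ is similar to the symmetric matrix $V^{-1/2}FV^{-1/2}$, and Courant--Fischer yields
\[
\R_0(d_I)=\rho(FV^{-1})=\max_{\mathbf{x}\neq\mathbf{0}}\frac{\mathbf{x}^{T}F\mathbf{x}}{\mathbf{x}^{T}V(d_I)\mathbf{x}}.
\]
The key structural fact is that $-L$ is a weighted graph Laplacian: its off-diagonal entries are nonpositive and its row sums vanish. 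Consequently
\[
-\mathbf{x}^{T}L\mathbf{x}=\sum_{i<j}L_{ij}(x_i-x_j)^2\ge 0,
\]
with equality iff $\mathbf{x}\in\ker L$, and irreducibility of $L$ forces $\ker L=\mathrm{span}\{\mathbf{1}\}$. Because $\mathbf{x}^{T}V(d_I)\mathbf{x}=\mathbf{x}^{T}D\mathbf{x}+d_I(-\mathbf{x}^{T}L\mathbf{x})$ is therefore non-decreasing in $d_I$ for each fixed $\mathbf{x}$, the weak bound $\R_0'(d_I)\le 0$ follows immediately.

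To sharpen the conclusion to $\R_0'(d_I)<0$ on $(0,\infty)$, I would invoke the envelope theorem (Danskin). The matrix $FV^{-1}$ has strictly positive entries (as the product of a positive diagonal with the entrywise positive inverse of the irreducible M-matrix $V$), so its Perron eigenvalue is simple and the normalized maximizer $\psi=\psi(d_I)$ of the Rayleigh quotient is unique; by analytic perturbation theory it depends analytically on $d_I$. Differentiating at the maximizer and using $\partial V/\partial d_I=-L$ gives the clean identity
\[
\R_0'(d_I)=\R_0(d_I)\,\frac{\psi^{T}L\psi}{\psi^{T}V\psi}.
\]

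The denominator is strictly positive, so it suffices to exclude $\psi^{T}L\psi=0$. By the Laplacian identity this would force $\psi=c\mathbf{1}$ for some $c\neq 0$; substituting into the generalized eigenvalue equation $F\psi=\R_0 V\psi$ and using $L\mathbf{1}=\mathbf{0}$ would then give $\beta_i=\R_0\gamma_i$ for every $i$, i.e., all patch reproduction numbers equal $\R_0$, contradicting (B2). Hence $\R_0'(d_I)<0$ on $(0,\infty)$, and strict decrease on the closed interval $[0,\infty)$ follows from continuity of $\R_0$ at zero. The main obstacle is precisely ruling out that the maximizer collapses onto $\ker L=\mathrm{span}\{\mathbf{1}\}$; this is where (B2) enters, and it enters through the above generalized eigenvalue identity rather than through any direct manipulation of $V^{-1}$, which is what made the pre-2019 proofs intractable for $n\ge 3$.
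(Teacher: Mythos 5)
Your proof is correct, and its decisive steps coincide with the paper's: both arguments reduce $\R_0'$ to the sign of the Laplacian quadratic form evaluated at the (generalized) Perron eigenvector --- your identity $\R_0'=\R_0\,\boldsymbol{\psi}^T L\boldsymbol{\psi}/(\boldsymbol{\psi}^T V\boldsymbol{\psi})$ is the paper's $\R_0'=\R_0^2\,\boldsymbol{v}^T L\boldsymbol{v}/(\boldsymbol{v}^T F\boldsymbol{v})$ rewritten via $F\boldsymbol{\psi}=\R_0 V\boldsymbol{\psi}$ --- and both then use $\boldsymbol{\psi}^T L\boldsymbol{\psi}=-\sum_{i<j}L_{ij}(\psi_i-\psi_j)^2\le 0$ together with irreducibility and (B2) to exclude the constant eigenvector. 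What you do differently is the derivation of that identity: the paper differentiates the eigen-equation $\bigl(\tfrac{1}{\R_0}F-D+d_IL\bigr)\boldsymbol{v}=\boldsymbol{0}$ and pairs it with $\boldsymbol{v}^T$, letting the symmetry annihilate the $\boldsymbol{v}'$ term, whereas you use the Courant--Fischer characterization $\R_0=\max_{\mathbf{x}\neq\mathbf{0}}\mathbf{x}^T F\mathbf{x}/\mathbf{x}^T V\mathbf{x}$ (legitimate since $V=D-d_IL$ is symmetric positive definite for $d_I\ge 0$) and envelope (Danskin) differentiation at the maximizer. The variational route buys you weak monotonicity on all of $[0,\infty)$ with no differentiability or simplicity assumptions, since $\mathbf{x}^T V\mathbf{x}$ increases pointwise in $d_I$; this is exactly what lets you pass from $\R_0'<0$ on $(0,\infty)$ to strict decrease on the closed half-line, and it makes the role of symmetry transparent. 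The paper's route is more elementary, needing only the Perron eigenvector and no variational or envelope machinery. You also correctly supply the hypotheses the envelope step requires --- simplicity of the Perron root for $d_I>0$ (entrywise positivity of $FV^{-1}$, $V$ being an irreducible nonsingular M-matrix) and smooth dependence of the maximizer --- and you rightly confine the derivative claim to $(0,\infty)$, where at $d_I=0$ the top eigenvalue of $FD^{-1}$ need not be simple.
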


\begin{proof}
The fact $\R_0=\rho(FV^{-1})=\rho(V^{-1}F)$ implies that there exists a column vector $\boldsymbol{v}:=\boldsymbol{v}(d_I)=(v_1,\dots,v_n)^T\gg \boldsymbol{0}$ such that $V^{-1}F\boldsymbol{v}=\R_0\boldsymbol{v}$, or equivalently,
\begin{equation}\label{eqn1}
\begin{aligned}
\left(\frac{1}{\R_0}F-V\right)\boldsymbol{v}=
\left(\frac{1}{\R_0}F-D+d_IL\right)\boldsymbol{v}=0.
\end{aligned}
\end{equation}
Differentiating both sides of \eqref{eqn1} with respect to $d_I$ gives
\begin{equation}\label{eqn2}
\begin{aligned}
\left(-\frac{\R_0'}{\R_0^2}F+L\right)\boldsymbol{v}
+\left(\frac{1}{\R_0}F-D+d_IL\right)\boldsymbol{v}'=0.
\end{aligned}
\end{equation}
Multiplying \eqref{eqn1} by $(\boldsymbol{v}')^T$ and \eqref{eqn2} by $\boldsymbol{v}^T$, and subtracting the two resulting equations yield
\[\boldsymbol{v}^T\left(-\frac{\R_0'}{\R_0^2}F+L\right)\boldsymbol{v}=0\]
due to the symmetry of $\frac{1}{\R_0}F-D+d_IL$. We thus have
\[\R_0'=\dfrac{\boldsymbol{v}^T L \boldsymbol{v}}{\boldsymbol{v}^T F \boldsymbol{v}}\R_0^2.\]
It follows from the symmetry of $L$ that
\begin{equation*}
\begin{aligned}
&\boldsymbol{v}^T L \boldsymbol{v}=\sum\limits^n_{i=1}\sum\limits^n_{j=1}L_{ij}v_iv_j
=\sum\limits^n_{i=1}\sum\limits_{j\neq i}L_{ij}v_iv_j+\sum\limits^n_{i=1}L_{ii}v_i^2 \\
=& \sum\limits^n_{i=1}\sum\limits_{j\neq i}L_{ij}v_iv_j-\sum\limits^n_{i=1}\sum\limits_{j\neq i}L_{ji}v_i^2=
\sum\limits^n_{i=1}\sum\limits_{j\neq i}L_{ij}v_i(v_j-v_i)\\
=&\sum\limits^n_{i=1}\sum\limits_{j\neq i}L_{ij}v_j(v_i-v_j)=
-\frac{1}{2}\sum\limits^n_{i=1}\sum\limits_{j\neq i}L_{ij}(v_i-v_j)^2\le0.
\end{aligned}
\end{equation*}
Similar to the proof of Lemma 3.4 in Allen et al. \cite{allen2007asymptotic}, we can
use the irreducibility of $L$ to prove by contradiction that $\boldsymbol{v}^T L \boldsymbol{v}<0$. In particular, if $v_1=\dots=v_n$ then \eqref{eqn1} implies that $\R_0^{(i)}=\R_0$ for $1\le i\le n$, a contradiction. Hence $\R_0'(d_I)<0$ for $d_I\in(0,\infty)$.
\end{proof}

Before stating the general result on the strict monotonicity of $\R_0$ with respect to $d_I$ in case of asymmetric $L$, we introduce a lemma on the spectral bound of a class of essentially nonnegative matrices.

\begin{lemma}[Theorem 1 in Altenberg \cite{altenberg2012resolvent}, Theorem 1.1 in Altenberg \cite{altenberg2013ordering}, Theorem 5.2 in Karlin \cite{karlin1982classifications}]\label{lem23}
Let $P$ be an irreducible stochastic matrix (i.e., nonnegative and each column summing to one), and let $D$ be a positive diagonal matrix that is not a scalar multiple of identity matrix $\mathbb{I}_n$ of order $n\ge 2$. Put
\[ M(\alpha)=(1-\alpha)\mathbb{I}_n+\alpha P. \]
Then for $\alpha>0$, the spectral bound $s(M(\alpha)D)$ has the following properties:
\begin{itemize}
\item[{\rm(a)}] $\frac{d}{d\alpha}s(M(\alpha)D)<0$. Thus $s(M(\alpha)D)$  decreases strictly as $\alpha$ increases.
\item[{\rm(b)}] $s(M(\alpha)D)$ is strictly convex in $\alpha$. Thus $\frac{d^2}{d\alpha^2}s(M(\alpha)D)\ge0$.
\end{itemize}
\end{lemma}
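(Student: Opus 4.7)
The plan is to combine the Perron--Frobenius theorem for essentially nonnegative irreducible matrices with a perturbation identity, exploiting the column-stochastic structure of $P$ to extract a scalar characterization of $\lambda(\alpha):=s(M(\alpha)D)$. For every $\alpha>0$ the matrix $M(\alpha)D=(1-\alpha)D+\alpha PD$ is essentially nonnegative (off-diagonal entries $\alpha P_{ij}d_j\ge 0$) and irreducible (off-diagonal zero pattern inherited from $P$), so $\lambda(\alpha)$ is a simple, real-analytic eigenvalue carrying componentwise positive right and left Perron eigenvectors $\boldsymbol{u}(\alpha)$ and $\boldsymbol{w}(\alpha)$.

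For part (a), I would differentiate $M(\alpha)D\boldsymbol{u}=\lambda\boldsymbol{u}$ in $\alpha$, pair with $\boldsymbol{w}^{T}$ to eliminate $\boldsymbol{u}'$, and substitute $\alpha PD\boldsymbol{u}=\lambda\boldsymbol{u}-(1-\alpha)D\boldsymbol{u}$ back into the result to collapse it to
\[
\lambda'(\alpha)=\frac{\lambda\,\boldsymbol{w}^{T}\boldsymbol{u}-\boldsymbol{w}^{T}D\boldsymbol{u}}{\alpha\,\boldsymbol{w}^{T}\boldsymbol{u}}.
\]
A useful companion identity comes from pairing the same eigenvalue equation with $\mathbf{1}^{T}$ and invoking the column-stochastic identity $\mathbf{1}^{T}P=\mathbf{1}^{T}$: this reduces to $\mathbf{1}^{T}D\boldsymbol{u}=\lambda\,\mathbf{1}^{T}\boldsymbol{u}$, so $\lambda(\alpha)$ is exactly the $\boldsymbol{u}(\alpha)$-weighted arithmetic mean of the $d_i$. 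Consequently, strict monotonicity $\lambda'(\alpha)<0$ is equivalent to the scalar inequality
\[
\frac{\sum_{i}w_iu_id_i}{\sum_{i}w_iu_i}\;>\;\frac{\sum_{i}u_id_i}{\sum_{i}u_i},
\]
i.e.\ a strictly positive covariance of $w$ and $d$ under the probability measure with weights $u_i/\sum_{j}u_j$.

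The main obstacle is exactly this correlation inequality, since the symmetrization argument of Theorem~2.1 (where symmetry of $L$ allowed $\boldsymbol{v}^{T}L\boldsymbol{v}$ to be rewritten as a nonpositive sum of squares) is unavailable for asymmetric $P$. My plan is to use an Altenberg-style resolvent reformulation: since $\lambda>(1-\alpha)d_i$ for every $i$ by Perron--Frobenius together with irreducibility, the eigenvalue equation rearranges to $\rho\bigl(\alpha P\diag(d_i/(\lambda-(1-\alpha)d_i))\bigr)=1$, and applying a Friedland--Karlin type Jensen inequality for the spectral radius of $P$ multiplied by a positive diagonal factor yields the required strict positivity; the equality case would propagate by the irreducibility of $P$ to $d_1=\cdots=d_n$, contradicting the hypothesis. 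Part (b) then follows by differentiating the first-order identity a second time, expressing $\boldsymbol{u}'(\alpha)$ and $\boldsymbol{w}'(\alpha)$ via the Fredholm resolvent on the complement of the Perron eigenspace, and recasting $\lambda''(\alpha)$ as a Cauchy--Schwarz-type quadratic form in the positive measure $w_iu_i$ whose strict positivity is forced by the same covariance mechanism as in (a), given the hypothesis that $D$ is not a scalar multiple of $\mathbb{I}_n$.
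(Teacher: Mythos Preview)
Your treatment of part~(a) is essentially the argument the paper cites: the derivative identity
\[
\lambda'(\alpha)=\frac{\lambda\,\boldsymbol{w}^{T}\boldsymbol{u}-\boldsymbol{w}^{T}D\boldsymbol{u}}{\alpha\,\boldsymbol{w}^{T}\boldsymbol{u}}
\]
together with the column-stochastic relation $\lambda=\sum_i d_iu_i/\sum_i u_i$ reduces the question to a covariance inequality, which is then settled by a Friedland--Karlin spectral-radius inequality applied to the resolvent reformulation. This is exactly Altenberg's route, so for part~(a) you and the paper agree.

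For part~(b) there is a genuine gap. Your plan is to differentiate once more, express $\boldsymbol{u}'$ and $\boldsymbol{w}'$ through the reduced resolvent $S=(\lambda\mathbb{I}_n-M(\alpha)D)^{\#}$, and recognize $\lambda''(\alpha)$ as ``a Cauchy--Schwarz-type quadratic form whose strict positivity is forced by the same covariance mechanism.'' But the standard second-derivative formula gives
\[
\lambda''(\alpha)=-\,\frac{2\,\boldsymbol{w}^{T}(P-\mathbb{I}_n)D\,S\,(P-\mathbb{I}_n)D\,\boldsymbol{u}}{\boldsymbol{w}^{T}\boldsymbol{u}},
\]
and the operator $S$ acts on the complement of the Perron eigenspace, where $M(\alpha)D$ may have complex eigenvalues; there is no evident positive measure with respect to which this is a Cauchy--Schwarz expression, and the covariance mechanism of part~(a) does not reappear here. (Note also that the statement only asserts $\lambda''(\alpha)\ge 0$, not strict positivity of the second derivative.) The paper's proof of~(b) takes a completely different route that bypasses the second derivative altogether: write $M(\alpha)D=\alpha A+\beta D$ with $A=(P-\mathbb{I}_n)D$ essentially nonnegative and $\beta=1$; Friedland's theorem says $s(\alpha A+\beta D)$ is strictly convex in the diagonal variable, hence in $\beta$; and Altenberg's dual convexity lemma---using that the spectral bound is positively homogeneous of degree one in the pair $(\alpha,\beta)$---transfers strict convexity in $\beta$ to strict convexity in $\alpha$. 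That homogeneity/duality step is the idea your proposal is missing.
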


\begin{proof}
By the implicit function theorem, $s(M(\alpha)D)$, the spectral bound of the essentially nonnegative matrix $M(\alpha)D$, is twice differentiable with respect to $\alpha\in(0,\infty)$. Part (a) comes from the proof of Theorem 2 of Altenberg \cite{altenberg2009evolutionary}, which uses the results of Friedland and Karlin \cite{friedland1975some}, Friedland \cite{friedland1981convex}, and Karlin \cite{karlin1982classifications}.

Part (b) comes from the proof of Karlin's Theorem 5.2 by Altenberg \cite{altenberg2012resolvent}. For the convenience of readers, let us outline the argument. Note that
\[ M(\alpha) D=(\alpha (P-\mathbb{I}_n)+\mathbb{I}_n)D=\alpha(P-\mathbb{I}_n)D+\beta D=\alpha A+\beta D,\]
where $A:= (P-\mathbb{I}_n)D$ is an essentially nonnegative matrix and $\beta=1$. Now let $\beta$ vary in the interval $[0,+\infty)$. By Theorem 4.1 of Friedland \cite{friedland1981convex} (which strengthens the work of Cohen \cite{cohen1981convexity}), the spectral bound $s(\alpha A+\beta D)$ is strictly convex in $D$ and hence in $\beta$ as well. Then by  Lemma 1 on dual convexity in Altenberg \cite{altenberg2012resolvent}, we have that $s(\alpha A+\beta D)$ is strictly convex in $\alpha$, which also implies that $s(\alpha A+\beta D)$ is strictly decreasing in $\alpha$.
\end{proof}

Next we remove the restriction on the symmetry of the connectivity matrix $L$. The basic reproduction number $\R_0$ for model \eqref{sismodel} is found to be not only strictly decreasing but also strictly convex in $d_I\in[0,\infty)$.

\begin{theorem}\label{R0monotonicity}
For model \eqref{sismodel}, the basic reproduction number $\R_0$ is strictly decreasing and strictly convex in $d_I\in[0,\infty)$. Moreover, $\R_0'(d_I)<0$ and $\R_0''(d_I)>0$ for $d_I\in(0,\infty)$.
\end{theorem}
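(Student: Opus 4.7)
The plan is to reduce the theorem to Lemma~\ref{lem23} by rewriting $\R_0(d_I)$ as the reciprocal of a spectral bound that fits the form $s(M(\alpha)D)$ appearing there.

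First I would reformulate via $VF^{-1}$. Since $V=D-d_I L$ is a nonsingular M-matrix for every $d_I\ge 0$ and the matrices $V^{-1}F$ and $VF^{-1}$ share the same nonzero spectrum, one has
\[
\R_0(d_I)=\rho(V^{-1}F)=\frac{1}{\lambda_{\min}(VF^{-1})},
\]
where $\lambda_{\min}$ is the smallest real eigenvalue of the M-matrix $VF^{-1}$. Set $\tilde D_0:=DF^{-1}=\diag(1/\R_0^{(i)})$, which by (B2) is a positive diagonal matrix not equal to any scalar multiple of the identity, and notice that $LF^{-1}$ is essentially nonnegative, irreducible, and has zero column sums (because $L$ has zero column sums and $F$ is a positive diagonal). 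Writing $VF^{-1}=\tilde D_0-d_I LF^{-1}$ and using $\lambda_{\min}(A)=-s(-A)$ for M-matrices, we obtain
\[
\lambda_{\min}(VF^{-1})=-s\bigl(d_I LF^{-1}-\tilde D_0\bigr),
\]
so it suffices to show that $\psi(d_I):=s(d_I LF^{-1}-\tilde D_0)$ is strictly decreasing and strictly convex on $(0,\infty)$.

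Next I would use a shift and a reparametrization to bring $\psi$ into the form covered by Lemma~\ref{lem23}. Since $s(A+cI)=s(A)+c$,
\[
\psi(d_I)=s\bigl((cI-\tilde D_0)+d_I LF^{-1}\bigr)-c,
\]
and I would choose $c>\max_i\bigl(1/\R_0^{(i)}+|L_{ii}|/\beta_i\bigr)$ so that $cI-\tilde D_0$ is positive diagonal and non-scalar. Factor $LF^{-1}=(P-I)\hat D$, where $\hat D=\diag(|L_{ii}|/\beta_i)>0$ and $P$ is the irreducible stochastic matrix with $P_{ij}=L_{ij}/|L_{jj}|$ for $i\ne j$ and $P_{ii}=0$. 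Then define
\[
\tilde P:=I+(P-I)\hat D\,(cI-\tilde D_0)^{-1}.
\]
The identity $\sum_i(P-I)_{ij}=0$ forces the column sums of $\tilde P$ to equal $1$; the size of $c$ makes every entry of $\tilde P$ nonnegative; and $\tilde P$ has the same off-diagonal zero pattern as $P$, hence is irreducible. So $\tilde P$ is an irreducible stochastic matrix, and by construction $(P-I)\hat D=(\tilde P-I)(cI-\tilde D_0)$, giving
\[
\psi(d_I)=s\bigl(M_{\tilde P}(d_I)(cI-\tilde D_0)\bigr)-c, \qquad M_{\tilde P}(\alpha):=(1-\alpha)I+\alpha\tilde P.
\]
Lemma~\ref{lem23}, applied with the stochastic matrix $\tilde P$ and the positive diagonal $cI-\tilde D_0$, then yields that $\psi$ is strictly decreasing and strictly convex on $(0,\infty)$.

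Finally, translating back to $\R_0$: setting $f(d_I):=1/\R_0(d_I)=-\psi(d_I)$, the previous step gives $f>0$, $f'>0$ and $f''\le 0$ on $(0,\infty)$, whence
\[
\R_0'=-\frac{f'}{f^2}<0, \qquad \R_0''=-\frac{f''}{f^2}+\frac{2(f')^2}{f^3}>0
\]
for $d_I>0$, and continuity of $\R_0$ at $d_I=0$ extends the strict decrease and strict convexity to all of $[0,\infty)$.

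The main obstacle is the change-of-stochastic-matrix step $P\mapsto\tilde P$: Lemma~\ref{lem23} is stated only for $s(M(\alpha)D)$ in which the $D$ inside $(P-I)D$ matches the outer positive diagonal, whereas the natural factorization $LF^{-1}=(P-I)\hat D$ supplies the ``wrong'' diagonal $\hat D$. Reconciling them by the explicit substitution $\tilde P=I+(P-I)\hat D(cI-\tilde D_0)^{-1}$ is what permits a clean invocation of Lemma~\ref{lem23}; the remainder is routine M-matrix algebra and the short calculus conversion from strict concavity of $1/\R_0$ into strict convexity of $\R_0$.
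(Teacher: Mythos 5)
Your proposal is correct and takes essentially the same route as the paper: after the shift by a large multiple of the identity, your matrix $\tilde P = I + LF^{-1}(cI - DF^{-1})^{-1}$ and outer diagonal $cI - DF^{-1}$ coincide exactly with the paper's $\hat P$ and $\hat D$ (with $c=k$), so both arguments reduce to Lemma~\ref{lem23} and then perform the same reciprocal calculus to get $\R_0'<0$ and $\R_0''>0$. The only blemish is the stated justification that $V^{-1}F$ and $VF^{-1}$ ``share the same nonzero spectrum'' (they are inverses of each other, not cospectral), but the identity you actually use, $\rho(FV^{-1}) = 1/\lambda_{\min}(VF^{-1})$ for the M-matrix $VF^{-1}$, is correct.
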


\begin{proof}
Denote $\tilde{D}=DF^{-1}$ and $\tilde{L}=LF^{-1}$ and $\tilde{V}=\tilde{D}-d_I\tilde{L}$. By the Perron-Frobenius theorem \cite{horn2013matrix}, there is a unique real vector $\boldsymbol{v}\gg0$ such that
\[FV^{-1}\boldsymbol{v}=(DF^{-1}-d_ILF^{-1})^{-1}\boldsymbol{v}=\tilde{V}^{-1}\boldsymbol{v}=\R_0\boldsymbol{v},\]
which implies that
\begin{equation*}\label{eqn21}
\begin{aligned}
\dfrac{1}{\R_0}\boldsymbol{v}=\tilde{V}\boldsymbol{v},
\end{aligned}
\end{equation*}
or equivalently,
\[(k\mathbb{I}_n-\tilde{V})\boldsymbol{v}=\left(k-\dfrac{1}{\R_0}\right)\boldsymbol{v}\ \mbox{ for } k\in\mathbb{R}.\]
Clearly, the square matrix
\[k\mathbb{I}_n-\tilde{V}=(k\mathbb{I}_n-\tilde{D})+d_I\tilde{L}
=(k\mathbb{I}_n-DF^{-1})+d_ILF^{-1}\]
is nonnegative and irreducible for sufficiently large $k$. Thus
\[\rho(d_I):=\rho(k\mathbb{I}_n-\tilde{D}+d_I\tilde{L})=k-\dfrac{1}{\R_0(d_I)},\]
or equivalently,
\[\R_0(d_I)=\dfrac{1}{k-\rho(d_I)}. \]
Therefore, the first and second derivatives of $\R_0$ with respect to $d_I$ are respectively
\begin{equation}\label{dR0}
\begin{aligned}
\R_0'(d_I)=\dfrac{\rho'(d_I)}{(k-\rho(d_I))^2}
\end{aligned}
\end{equation}
and
\begin{equation}\label{ddR0}
\begin{aligned}
\R_0''(d_I)=\frac{(k-\rho(d_I))\rho''(d_I)+2 (\rho'(d_I))^2}{(k-\rho(d_I))^3}.
\end{aligned}
\end{equation}

Choose $k$ large enough so that all the diagonal entries of
\[ \hat{D}:= k\mathbb{I}_n-\tilde{D} \]
are positive and the matrix
\[ \hat{P}:= \mathbb{I}_n+\tilde{L}\hat{D}^{-1} \]
is irreducible and stochastic. Note that $\hat{D}$ is not a scalar multiple of the identity matrix $\mathbb{I}_n$.  By Lemma \ref{lem23}, the spectral radius
\[\rho(d_I)=\rho(k\mathbb{I}_n-\tilde{D}+d_I\tilde{L})=\rho(\hat{D}+d_I(\hat{P}-\mathbb{I}_n)\hat{D})
=s(\hat{D}+d_I(\hat{P}-\mathbb{I}_n)\hat{D})\]
satisfies $\rho'(d_I)<0$ and $\rho''(d_I)\ge0$. It follows from \eqref{dR0} and \eqref{ddR0} that $\R_0'(d_I)<0$ and $\R_0''(d_I)>0$. Therefore, the strict monotonicity and strict convexity of $\R_0(d_I)$ follows.
\end{proof}

Biologically, fast diffusion of the infected subpopulation decreases the disease transmission potential. The negativity of $\R_0'(d_I)$ and the positivity of $\R_0''(d_I)$ mean that $\R_0$ is monotone decreasing but has a positive acceleration. So the impact of increasing infected human diffusion on reducing the infection risk keeps shrinking. In particular, the fastest declining speed for $\R_0$ is achieved at $d_I=0$. Suppose that
$\R_0^{(1)}\le\R_0^{(2)}\le\cdots\le\R_0^{(n-1)}<\R_0^{(n)}$, then
\begin{equation}\label{dR0dI0}
\begin{aligned}
\R_{0}'(0)=\lim\limits_{d_I\to 0}\R_{0}'(d_I)=\dfrac{\beta_n}{\gamma_n^2}L_{nn}<0.
\end{aligned}
\end{equation}
Indeed, let $A(d_I)=k\mathbb{I}_n-\tilde{D}+d_I\tilde{L}$; when $d_I=0$, the right and left eigenvectors corresponding to the largest eigenvalue $k-1/\R_0(0)$ of matrix $A(0)=k\mathbb{I}_n-\tilde{D}$ are respectively
\[\boldsymbol{x}(0)=(0,\dots,0,1)^T\ \mbox{ and }
\ \boldsymbol{y}^T(0)=(0,\dots,0,1).\]
By our assumption, the largest eigenvalue is not repeated, then $\rho(0)=k-1/\R_0^{(n)}$ and
\[\dfrac{d\rho}{d d_I}\Big|_{d_I=0}=\boldsymbol{y}^T(0)\dfrac{dA(d_I)}{d d_I}\Big|_{d_I=0}\boldsymbol{x}(0)
=\boldsymbol{y}^T(0)\tilde{L}\boldsymbol{x}(0)=\dfrac{L_{nn}}{\beta_n}.\]
Substituting the above results into \eqref{dR0} gives \eqref{dR0dI0}.

\section{Applications}

We will demonstrate some simple applications of the approach and results obtained in previous section to the SIS epidemic patch model \eqref{sismodel}.

\subsection{Asymptotic Behavior of $\R_0$ and $s(F-V)$}

\begin{lemma}\label{lem33}
Let $L=(L_{ij})$ be an $n\times n$ matrix with zero column sum and $L^*=(L^*_{ij})^T$ be the adjoint matrix of $L$ with $L^*_{ij}$ representing the $(i,j)$ cofactor of $L$. Then
\begin{itemize}
  \item [{\rm(a)}] $L^*_{ij}=L^*_{jj}$ for $1\le i,j\le n$. In particular, if $L$ is symmetric, then $L^*_{ij}=L^*_{11}$ for $1\le i,j\le n$.
  \item [{\rm(b)}] $(L^*_{11},\dots,L^*_{nn})^T$ is either zero or an eigenvector of $L$. In addition, if $L$ is essentially nonnegative and irreducible, then $(-1)^{n-1}(L^*_{11},\dots,L^*_{nn})^T$ is strictly positive.
\end{itemize}
\end{lemma}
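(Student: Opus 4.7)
For part (a), I would exploit the adjoint identity $L\cdot\mathrm{adj}(L)=\mathrm{adj}(L)\cdot L=\det(L)\,\mathbb{I}_n$, where $(\mathrm{adj}(L))_{ij}=L^{*}_{ji}$. Since the columns of $L$ sum to zero, $\mathbf{1}^{T}L=\mathbf{0}$, so $L$ is singular and both products above vanish. If $\mathrm{rank}(L)<n-1$ then every $(n-1)\times(n-1)$ minor vanishes and the identity $L^{*}_{ij}=L^{*}_{jj}$ is trivial. Otherwise $\mathrm{rank}(L)=n-1$, $\mathrm{adj}(L)$ has rank one, and each of its rows is a left null vector of $L$; since the left kernel of $L$ is one-dimensional and spanned by $\mathbf{1}$, each row of $\mathrm{adj}(L)$ has constant entries. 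Reading the $i$-th row then gives $L^{*}_{1i}=L^{*}_{2i}=\cdots=L^{*}_{ni}$, i.e., $L^{*}_{ij}=L^{*}_{jj}$. For symmetric $L$ the cofactor matrix is symmetric, so $L^{*}_{ij}=L^{*}_{ji}$, and combining this with part (a) in both orderings of the indices gives $L^{*}_{ii}=L^{*}_{jj}$ for all $i,j$, whence every cofactor equals $L^{*}_{11}$.

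For the eigenvector statement in part (b), the same adjoint identity read column-wise gives the $j$-th column of $\mathrm{adj}(L)$ as $(L^{*}_{j1},\dots,L^{*}_{jn})^{T}$, which by part (a) equals $(L^{*}_{11},\dots,L^{*}_{nn})^{T}$ independently of $j$. Since $L\cdot\mathrm{adj}(L)=0$, this column is annihilated by $L$, so $(L^{*}_{11},\dots,L^{*}_{nn})^{T}\in\ker L$ and is therefore either the zero vector or a $0$-eigenvector of $L$.

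The strict-sign assertion is the main obstacle. My plan is first to rule out the zero possibility: essential nonnegativity and irreducibility of $L$, together with the positive left eigenvector $\mathbf{1}$ for eigenvalue $0$, force $0=s(L)$ to be a simple Perron eigenvalue, so $\mathrm{rank}(L)=n-1$ and the vector $(L^{*}_{11},\dots,L^{*}_{nn})^{T}$ must be a nonzero scalar multiple of the strictly positive Perron eigenvector; in particular all $L^{*}_{ii}$ share a common strict sign. To pin down that sign, I would set $s:=\max_{i}(-L_{ii})$ and $P:=s\mathbb{I}_n+L\ge 0$, so that $P$ is nonnegative and irreducible with $\rho(P)=s+s(L)=s$ and $-L=s\mathbb{I}_n-P$. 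Deleting row and column $i$ then gives $-L_{(i)}=s\mathbb{I}_{n-1}-P_{(i)}$ for the corresponding principal submatrices, and the strict Perron--Frobenius inequality $\rho(P_{(i)})<\rho(P)=s$ for proper principal submatrices of an irreducible nonnegative matrix shows that $-L_{(i)}$ is a nonsingular $M$-matrix. Hence $(-1)^{n-1}L^{*}_{ii}=(-1)^{n-1}\det(L_{(i)})=\det(-L_{(i)})>0$, which is exactly the strict positivity claimed.
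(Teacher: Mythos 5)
Your proof is correct, but it reaches the conclusions by a genuinely different route than the paper, most notably in the sign claim of (b). For (a), the paper evaluates the difference $L^*_{ij}-L^*_{jj}$ directly as a single determinant whose $j$-th column is replaced by $e_i-e_j$, and kills it because every column of that matrix still sums to zero; your argument instead runs through the adjugate identity and the rank dichotomy (all cofactors vanish when $\mathrm{rank}\,L<n-1$; when $\mathrm{rank}\,L=n-1$ the rows of $\mathrm{adj}(L)$ are left null vectors, hence multiples of $\boldsymbol{1}^T$). Both are fine; the paper's computation is more elementary and rank-free, yours makes the structural reason (one-dimensional left kernel spanned by $\boldsymbol{1}$) transparent. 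The first half of (b) is essentially the same in both proofs. For the strict positivity of $(-1)^{n-1}L^*_{ii}$, the paper argues via diagonal dominance of the principal submatrices $\tilde L_{ii}$ and Corollary 6.2.27 of Horn--Johnson, and has to split into cases according to whether $\tilde L_{ii}$ is irreducible or reducible (with a block-triangular reduction in the latter case). Your argument writes $-L_{(i)}=s\mathbb{I}_{n-1}-P_{(i)}$ with $P=s\mathbb{I}_n+L\ge 0$ irreducible, $\rho(P)=s$ (from the positive left eigenvector $\boldsymbol{1}$), and invokes the strict Perron--Frobenius inequality $\rho(P_{(i)})<\rho(P)$ for proper principal submatrices; this identifies $-L_{(i)}$ as a nonsingular $M$-matrix with $\det(-L_{(i)})>0$ in one stroke, with no case analysis on reducibility. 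That is arguably cleaner than the paper's treatment. One small remark: your preliminary Perron step showing that the diagonal-cofactor vector is a nonzero multiple of the positive Perron eigenvector is made redundant by the $M$-matrix computation that follows, since the latter already gives $(-1)^{n-1}L^*_{ii}>0$ for every $i$; you could delete it without loss.
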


\begin{proof}
(a) For any $i\neq j$ and $1\le i,j\le n$, we have
\begin{equation*}
\begin{aligned}
L^*_{ij}-L^*_{jj}&=
\begin{vmatrix}
L_{11} & \cdots & L_{1j-1} & 0 & L_{1j+1} & \cdots & L_{1n} \\
\vdots & \ddots & \vdots & \vdots & \vdots & \ddots & \vdots \\
L_{i-11} & \cdots & L_{i-1j-1} & 0 & L_{i-1j+1} & \cdots & L_{i-1n} \\
L_{i1} & \cdots & L_{ij-1} & 1 & L_{ij+1} & \cdots & L_{in} \\
L_{i+11} & \cdots & L_{i+1j-1} & 0 & L_{i+1j+1} & \cdots & L_{i+1n} \\
\vdots & \ddots & \vdots & \vdots & \vdots & \ddots & \vdots \\
L_{j-11} & \cdots & L_{j-1j-1} & 0 & L_{j-1j+1} & \cdots & L_{j-1n} \\
L_{j1} & \cdots & L_{jj-1} & -1 & L_{jj+1} & \cdots & L_{jn} \\
L_{j+11} & \cdots & L_{j+1j-1} & 0 & L_{j+1j+1} & \cdots & L_{j+1n} \\
\vdots & \ddots & \vdots & \vdots & \vdots & \ddots & \vdots \\
L_{n1} & \cdots & L_{nj-1} & 0 & L_{nj+1} & \cdots & L_{nn}
\end{vmatrix}=0
\end{aligned}
\end{equation*}
due to the zero column sum of the corresponding matrix. If $L$ is symmetric, so is $L^*$. Hence $L^*_{ij}=L^*_{1j}=L^*_{j1}=L^*_{11}$ for $1\le i,j\le n$.

(b) It follows from $LL^*=(\det L)I_n=0$ that the $(i,j)$ entry of $LL^*$ satisfies
\[\sum\limits_{k\in\Omega}L_{ik}L^*_{jk}=\sum\limits_{k\in\Omega}L_{ik}L^*_{kk}=0\Rightarrow L(L^*_{11},\dots,L^*_{nn})^T=0.\]
If $L$ is essentially nonnegative and irreducible, then $L_{kk}<0$ for $k=1,\dots,n$ and
$L^*_{ii}$ is the determinant of a diagonally dominant matrix, denoted by $\tilde{L}_{ii}$. If $\tilde{L}_{ii}$ is irreducible, by Corollary 6.2.27 in Horn and Johnson \cite{horn2013matrix}, every eigenvalue of matrix $-\tilde{L}_{ii}$ has positive real part and hence $(-1)^{n-1}L^*_{ii}=(-1)^{n-1}\det \tilde{L}_{ii}=(-1)^{n-1}(-1)^{n-1}\det (-\tilde{L}_{ii})=\det (-\tilde{L}_{ii})>0$. If $\tilde{L}_{ii}$ is reducible, then $\tilde{L}_{ii}$ is similar via a permutation to a block upper triangular matrix where each diagonal block is either a single entry or an irreducibly dominant submatrix. The result is obtained by again applying Corollary 6.2.27 in Horn and Johnson \cite{horn2013matrix} to each diagonal block.
\end{proof}

\begin{lemma}
Let $D=\diag\{\gamma_1,\dots,\gamma_n\}$ be a positive diagonal matrix and $L$ be an essentially nonnegative and irreducible matrix with zero column sum. As $d_I\to\infty$, the inverse of $V=D-d_I L$ converges to a strictly positive rank-one matrix
\[V^{-1}_{\infty}:=\lim\limits_{d_I\to\infty}V^{-1}=\dfrac{1}{\sum\limits_{i\in\Omega}\gamma_iL^*_{ii}}L^*,\]
where $L^*=(L^*_{ij})^T$ is the adjoint matrix of $L$.
\end{lemma}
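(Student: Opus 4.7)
My plan is to exploit the classical identity $V^{-1}=\operatorname{adj}(V)/\det V$ and track the leading behavior in $d_I$ of numerator and denominator separately. Since $V=D-d_IL$ is affine in $d_I$, the entries of $\operatorname{adj}(V)$ and the scalar $\det V$ are polynomials in $d_I$ of degree at most $n-1$ and $n$ respectively, so the limit $d_I\to\infty$ reduces to identifying the leading coefficients on both sides.

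For the denominator I would expand $\det(D-d_IL)$ by multilinearity in the columns. The $d_I^{n}$ coefficient equals $(-1)^n\det L$ and vanishes because $L$ has zero column sum. The $d_I^{n-1}$ coefficient comes from retaining the ``$D$-column'' $\gamma_je_j$ in exactly one position $j$ and $-d_IL_{\cdot k}$ in the remaining $n-1$ positions; a single cofactor expansion along column $j$ gives the contribution $\gamma_j(-d_I)^{n-1}L^*_{jj}$, and summing over $j$ yields
\[\det V=(-1)^{n-1}d_I^{n-1}\sum_{i\in\Omega}\gamma_iL^*_{ii}+O(d_I^{n-2}).\]
For the numerator, the $(i,j)$ entry of $\operatorname{adj}(V)$ is $(-1)^{i+j}$ times the $(j,i)$ minor of $V$; its top-order term as $d_I\to\infty$ is extracted from the $-d_IL$ part of $V$, and once the sign from the $(n-1)$ factors of $-d_I$ is combined with the cofactor sign it reduces to $(-d_I)^{n-1}L^*_{ji}$. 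Hence $\operatorname{adj}(V)=(-d_I)^{n-1}L^*+O(d_I^{n-2})$.

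Dividing and sending $d_I\to\infty$ then produces the asserted formula, provided that the denominator limit is nonzero. This is where Lemma~\ref{lem33} becomes indispensable: part~(b) ensures $(-1)^{n-1}L^*_{ii}>0$ for every $i$, so $\sum_i\gamma_iL^*_{ii}$ has sign $(-1)^{n-1}$ and is nonzero; part~(a) further gives $L^*_{ji}=L^*_{ii}$, so the adjoint matrix $L^*$ has the rank-one form $(L^*_{11},\dots,L^*_{nn})^T\mathbf{1}^T$ whose entries all share the sign $(-1)^{n-1}$. These two facts together deliver both the strict positivity and the rank-one structure of $V^{-1}_\infty$.

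The only delicate point, and the step where I expect to be most careful, is the sign bookkeeping in the leading-coefficient computation: the $(-1)^{i+j}$ produced by the cofactor expansion, the $(-1)^{n-1}$ arising from the $n-1$ factors of $-d_I$, and the final $(-1)^{n-1}$ that must cancel between numerator and denominator all need to be tracked consistently. Once these signs are aligned, the rest is a routine leading-order comparison whose qualitative conclusions (nonzero denominator, positivity, rank-one) are then supplied by Lemma~\ref{lem33}.
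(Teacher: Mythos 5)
Your proposal is correct and follows essentially the same route as the paper: write $V^{-1}=\operatorname{adj}(V)/\det V$, identify the degree-$(n-1)$ leading coefficients of the determinant and of each cofactor in $d_I$, and invoke Lemma~\ref{lem33} (parts (a) and (b)) for the equality of cofactors, the sign $(-1)^{n-1}$, and hence the strictly positive rank-one limit. The sign bookkeeping you flag works out exactly as in the paper's computation of $a_{n-1}$ and $b_{n-1}$, so no changes are needed.
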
 

\begin{proof}
Since $V$ is a strictly diagonally dominant and irreducible $M$-matrix, the inverse of $V$ exists and it is positive. Obviously,
\[V^{-1}=\dfrac{1}{\det V}V^*,\]
where $V^*=(V^*_{ij})^T$ is the adjoint matrix of $V$ with $V^*_{ij}$ representing the $(i,j)$ cofactor of $V$. The determinant of $V$ can be written as
\[\det V=a_nd_I^n+a_{n-1}d_I^{n-1}+\cdots+a_1d_I+a_0,\]
where $a_n=(-1)^n\det L=0$ and $a_{n-1}=\sum\limits_{i\in\Omega}\gamma_i(-1)^{n-1}L^*_{ii}
=(-1)^{n-1}\sum\limits_{i\in\Omega}\gamma_iL^*_{ii}>0$. The positivity of $a_{n-1}$ comes from Lemma \ref{lem33}(b). Meanwhile, the $(i,j)$ cofactor of $V$ can be written as
\[V^*_{ij}=b_{n-1}d_I^{n-1}+\cdots+b_1d_I+b_0>0,\]
where $b_{n-1}=(-1)^{n-1}L^*_{ij}=(-1)^{n-1}L^*_{jj}>0$. Thus, the $(j,i)$ entry of $V^{-1}_{\infty}$ is
\[\lim\limits_{d_I\to\infty}\dfrac{V^*_{ij}}{\det V}
=\lim\limits_{d_I\to\infty}\dfrac{b_{n-1}d_I^{n-1}+\cdots+b_1d_I+b_0}{a_{n-1}d_I^{n-1}+\cdots+a_1d_I+a_0}
=\dfrac{b_{n-1}}{a_{n-1}}=L^*_{ij}\bigg/\sum\limits_{i\in\Omega}\gamma_iL^*_{ii}.\]
The proof is complete.
\end{proof}

Next, we improve some known results on the bounds of the basic reproduction number and the spectral bound of model \eqref{sismodel}.

\begin{theorem}\label{R0bounds}
For model \eqref{sismodel} with $d_I\in(0,\infty)$, the basic reproduction number $\R_0$ satisfies
\[\min\limits_{1\le i\le n}\R_0^{(i)}<\R_0(\infty)=\sum_{i\in\Omega} \beta_iL_{ii}^*\bigg/\sum_{i\in\Omega} \gamma_iL_{ii}^*<\R_{0}(d_I)=\rho(FV^{-1})<\R_0(0)=\max\limits_{1\le i\le n}\R_0^{(i)},\]
where $\R_0^{(i)}=\beta_i/\gamma_i$ and $L^*=(L^*_{ij})^T$ is the adjoint matrix of $L$.
\end{theorem}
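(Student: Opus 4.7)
The plan is to pin down the two endpoint values $\R_0(0)$ and $\R_0(\infty)$ and then invoke the strict monotonicity from Theorem~\ref{R0monotonicity} to interpolate. At $d_I=0$ the matrix $V$ collapses to $D$, so $FV^{-1}=\diag(\R_0^{(1)},\dots,\R_0^{(n)})$ and $\R_0(0)=\max_i\R_0^{(i)}$ is immediate. Because $\R_0(d_I)$ is strictly decreasing on $[0,\infty)$ and bounded below by zero, the limit $\R_0(\infty):=\lim_{d_I\to\infty}\R_0(d_I)$ exists, and the two middle strict inequalities $\R_0(\infty)<\R_0(d_I)<\R_0(0)$ hold for every $d_I\in(0,\infty)$ (strictness against $\R_0(\infty)$ is gotten by comparing $\R_0(d_I)$ with $\R_0(2d_I)$ before taking the limit).

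To identify $\R_0(\infty)$ in closed form, I would pass to the limit using the preceding lemma. Since $V^{-1}\to V^{-1}_\infty=\frac{1}{\sum_k\gamma_k L^*_{kk}}L^*$ entrywise and the spectral radius is continuous in the matrix entries, $\R_0(\infty)=\rho(FV^{-1}_\infty)$. By Lemma~\ref{lem33}(a), the matrix entry $(L^*)_{ij}=L^*_{ji}=L^*_{ii}$ is independent of $j$, so each row of $V^{-1}_\infty$ is constant; consequently $FV^{-1}_\infty$ is a positive rank-one matrix, and its spectral radius equals its unique nonzero eigenvalue, namely its trace
\[\R_0(\infty)=\frac{\sum_{i\in\Omega}\beta_iL^*_{ii}}{\sum_{k\in\Omega}\gamma_kL^*_{kk}}.\]

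For the leftmost inequality, I would factor out the common sign $(-1)^{n-1}$ from numerator and denominator; Lemma~\ref{lem33}(b) ensures $(-1)^{n-1}L^*_{ii}>0$ for every $i$, so $\R_0(\infty)$ is rewritten as a convex combination
\[\R_0(\infty)=\sum_{i\in\Omega}w_i\,\R_0^{(i)},\qquad w_i=\frac{\gamma_iL^*_{ii}}{\sum_{k\in\Omega}\gamma_kL^*_{kk}},\]
with strictly positive weights summing to one. Since the $\R_0^{(i)}$ are not all equal by hypothesis (B2), this combination lies strictly between $\min_i\R_0^{(i)}$ and $\max_i\R_0^{(i)}$, giving $\min_i\R_0^{(i)}<\R_0(\infty)$ as needed.

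The main obstacle I anticipate is the middle step: one must carefully distinguish the cofactor notation $L^*_{ij}$ from the matrix entry $(L^*)_{ij}=L^*_{ji}$ when extracting the rank-one structure of $V^{-1}_\infty$, and must invoke continuity of the spectral radius to justify $\rho(FV^{-1})\to\rho(FV^{-1}_\infty)$ as $d_I\to\infty$. Once the closed form for $\R_0(\infty)$ is in hand, everything else reduces to monotonicity and standard convex-combination arithmetic.
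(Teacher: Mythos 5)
Your proposal is correct and follows essentially the same route as the paper: both identify $\R_0(\infty)=\rho(FV^{-1}_\infty)$ via the limit lemma for $V^{-1}$, exploit Lemma \ref{lem33} to reduce $\rho(FV^{-1}_\infty)$ to the ratio $\sum_i\beta_iL^*_{ii}/\sum_i\gamma_iL^*_{ii}$ (the paper reads it off from $\boldsymbol{1}$ being a positive left eigenvector of the strictly positive matrix plus Perron--Frobenius, you from the trace of the rank-one matrix --- the same structural fact), and then conclude with the strict monotonicity of Theorem \ref{R0monotonicity}. Your explicit convex-combination argument for $\min_i\R_0^{(i)}<\R_0(\infty)$ and the evaluation $\R_0(0)=\rho(FD^{-1})=\max_i\R_0^{(i)}$ are just spelled-out versions of steps the paper handles by citing Gao--Ruan and Perron--Frobenius, so no substantive difference or gap.
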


\begin{proof}
The result that the multipatch reproduction number $\R_0$ is between the minimum and maximum patch reproduction numbers was proved by Gao and Ruan \cite{gao2011sis}. Indeed, this can be established by multiplying both sides of \eqref{eqn1} by $\boldsymbol{1}=\{1,\dots,1\}$, i.e.,
\[\boldsymbol{1}(FD^{-1}-\R_0\mathbb{I}_n)D\boldsymbol{v}=0,\]
where $FD^{-1}-\R_0\mathbb{I}_n=\diag\{\R_0^{(1)}-\R_0,\dots,\R_0^{(n)}-\R_0\}$ and $D\boldsymbol{v}\gg\boldsymbol{0}$.
It suffices to consider
\[\R_0(\infty):=\lim\limits_{d_I\to\infty}\R_0(d_I)=\lim\limits_{d_I\to\infty}\rho(FV^{-1})
=\rho\Big(\lim\limits_{d_I\to\infty}(FV^{-1})\Big)
=\rho\Big(F\lim\limits_{d_I\to\infty}V^{-1}\Big)=\rho(FV^{-1}_{\infty}).\]
The strictly positive matrix $FV^{-1}_{\infty}$ satisfies
\[\boldsymbol{1} FV^{-1}_{\infty}=(\beta_1,\dots,\beta_n)V^{-1}_{\infty}
=\dfrac{1}{\sum\limits_{i\in\Omega}\gamma_iL^*_{ii}}(\beta_1,\dots,\beta_n)L^*
=\dfrac{\sum\limits_{i\in\Omega}\beta_iL^*_{ii}}{\sum\limits_{i\in\Omega}\gamma_iL^*_{ii}}\boldsymbol{1}.\]
The proof is complete via the Perron-Frobenius theorem and the strict monotonicity of $\R_0$ with respect to $d_I$.
\end{proof}

The distribution of infected individuals as $d_I\to\infty$ is proportional to the positive vector $(-1)^{n-1}(L^*_{11},\dots,L^*_{nn})^T$.
The larger lower bound of $\R_0$ is the ratio of the average transmission rate $\sum\nolimits_{\beta}:=(-1)^{n-1}\sum\limits_{i\in\Omega}\beta_iL^*_{ii}$ to the average recovery rate $\sum\nolimits_{\gamma}:=(-1)^{n-1}\sum\limits_{i\in\Omega}\gamma_iL^*_{ii}$. Similar to Allen et al. \cite{allen2007asymptotic}, we call a patchy environment $\Omega$ a {\it low-risk domain} if
\[\sum\nolimits_{\beta}<\sum\nolimits_{\gamma},\]
but a {\it high-risk domain} if
\[\sum\nolimits_{\beta}\ge\sum\nolimits_{\gamma}.\]

Using Theorems \ref{R0monotonicity} and \ref{R0bounds}, we can easily obtain a generalization of Theorem 1 in Allen et al. \cite{allen2007asymptotic} as follows.

\begin{corollary} For model \eqref{sismodel}, suppose that $\R_0(0)=\max\limits_{i\in\Omega}\R_0^{(i)}>1$. The following hold:
\begin{itemize}
  \item [{\rm(a)}] In a low-risk domain, there exists a unique threshold value $d_I^*\in(0,\infty)$ determined by the polynomial equation $\det(F-V)=\det(F-D+d_IL)=0$ such that $\R_0>1$ for $d_I<d_I^*$ and $\R_0<1$ for $d_I>d_I^*$.
  \item [{\rm(b)}] In a high-risk domain, we have $\R_0>1$ for all $d_I\ge 0$.
\end{itemize}
\end{corollary}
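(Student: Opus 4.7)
The plan is to read the corollary as a direct consequence of the global shape of the function $d_I\mapsto\R_0(d_I)$ that has been pinned down in Theorems \ref{R0monotonicity} and \ref{R0bounds}. From those two results I already know that $\R_0$ is continuous, strictly decreasing, and strictly convex on $[0,\infty)$, with boundary values
\[\R_0(0)=\max_{i\in\Omega}\R_0^{(i)} \quad\text{and}\quad \R_0(\infty)=\frac{\sum_{i\in\Omega}\beta_iL_{ii}^*}{\sum_{i\in\Omega}\gamma_iL_{ii}^*}=\frac{\sum\nolimits_\beta}{\sum\nolimits_\gamma},\]
and with $\R_0(\infty)<\R_0(d_I)<\R_0(0)$ strictly for every $d_I\in(0,\infty)$. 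The standing hypothesis $\R_0(0)>1$ therefore places the entire graph above $\R_0(\infty)$ but below $\R_0(0)$, and the low-risk/high-risk dichotomy is exactly the dichotomy $\R_0(\infty)<1$ versus $\R_0(\infty)\ge 1$.

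For part (a), I would argue that in a low-risk domain one has $\sum\nolimits_\beta<\sum\nolimits_\gamma$, equivalently $\R_0(\infty)<1<\R_0(0)$. Continuity of $\R_0$ (guaranteed by Theorem \ref{R0monotonicity}) together with the intermediate value theorem produces some $d_I^*\in(0,\infty)$ with $\R_0(d_I^*)=1$, and strict monotonicity guarantees that $d_I^*$ is unique and that $\R_0(d_I)>1$ for $d_I<d_I^*$ while $\R_0(d_I)<1$ for $d_I>d_I^*$. To identify $d_I^*$ as a root of the stated polynomial, I note that $V=D-d_IL$ is a nonsingular irreducible $M$-matrix, so $\R_0(d_I)=\rho(FV^{-1})=1$ is equivalent to $1$ being an eigenvalue of $FV^{-1}$, hence
\[0=\det(FV^{-1}-\mathbb{I}_n)=\frac{\det(F-V)}{\det V}=\frac{\det(F-D+d_IL)}{\det V}.\]
Since $\det V\ne 0$, this collapses to $\det(F-D+d_IL)=0$, which is a polynomial in $d_I$ of degree at most $n$; the uniqueness of the positive root already comes from the monotonicity of $\R_0$, so no extra algebraic work is needed on this polynomial.

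For part (b), in a high-risk domain $\R_0(\infty)\ge 1$. Strict monotonicity from Theorem \ref{R0monotonicity} (or equivalently the strict inequality $\R_0(\infty)<\R_0(d_I)$ from Theorem \ref{R0bounds}) yields $\R_0(d_I)>\R_0(\infty)\ge 1$ for all $d_I\in(0,\infty)$, while $\R_0(0)=\max_{i}\R_0^{(i)}>1$ by hypothesis. Thus $\R_0(d_I)>1$ for every $d_I\ge 0$.

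I do not anticipate a genuine obstacle: once Theorems \ref{R0monotonicity} and \ref{R0bounds} are in hand, the corollary is essentially a one-variable calculus argument combined with a determinantal reformulation. The only point where I would be slightly careful is ensuring that $V$ remains invertible along the entire ray $d_I\in(0,\infty)$ so that the equivalence $\R_0(d_I)=1\Longleftrightarrow\det(F-V)=0$ is valid; this is immediate because $V$ is a strictly diagonally dominant irreducible $M$-matrix (a fact already used in the preceding lemma on $V_\infty^{-1}$), so $\det V>0$ throughout and the cancellation above is legitimate.
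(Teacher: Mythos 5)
Your proposal is correct and follows exactly the route the paper intends: the corollary is stated there without proof as an easy consequence of Theorems \ref{R0monotonicity} and \ref{R0bounds}, i.e.\ strict monotonicity and continuity of $\R_0(d_I)$ between the boundary values $\R_0(0)=\max_i\R_0^{(i)}$ and $\R_0(\infty)=\sum_{i}\beta_iL_{ii}^*/\sum_{i}\gamma_iL_{ii}^*$, plus the identification of the threshold via $\det(FV^{-1}-\mathbb{I}_n)\det V=\det(F-V)$. Your added care about the invertibility of $V=D-d_IL$ along the whole ray is exactly the right justification for that determinantal step, so nothing is missing.
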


With respect to the spectral bound of $F-V$, the following is a generalization of Lemma 3.4 in Allen et al. \cite{allen2007asymptotic}.

\begin{corollary}\label{spectralbound}
The spectral bound of the Jacobian matrix of model system \eqref{sismodel} at the disease-free equilibrium, $\lambda^*:=s(F-V)$, satisfies
\begin{itemize}
  \item [{\rm(a)}] $\lambda^*$ is strictly decreasing in $d_I\in[0,\infty)$.
  \item [{\rm(b)}] $\lambda^*\to\max\limits_{i\in\Omega}(\beta_i-\gamma_i)$ as $d_I\to0$.
  \item [{\rm(c)}] $\lambda^*\to\sum\limits_{i\in\Omega}(\beta_i-\gamma_i)L^*_{ii}\Big/\sum\limits_{i\in\Omega}L^*_{ii}$ as $d_I\to\infty$.
  \item [{\rm(d)}] In a low-risk domain, if $\R_0(0)=\max\limits_{i\in\Omega}\R_0^{(i)}>1$, then there exists a unique $d_I^*\in(0,\infty)$ determined by the polynomial equation $\det(F-V)=\det(F-D+d_IL)=0$ such that $\lambda^*>0$ for $d_I<d_I^*$ and $\lambda^*<0$ for $d_I>d_I^*$.
  \item [{\rm(e)}] In a high-risk domain, we have $\lambda^*>0$ for all $d_I\ge 0$.
\end{itemize}
\end{corollary}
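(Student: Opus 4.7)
The plan is to adapt the shift-and-rescale argument from Theorem~\ref{R0monotonicity}, but applied directly to the essentially nonnegative and irreducible matrix $F - V = (F - D) + d_I L$ rather than to $\tilde V^{-1}$. For part (a), I choose $k > 0$ large enough that $\hat D := k\mathbb{I}_n + F - D$ is a positive diagonal matrix and set $\hat P := \mathbb{I}_n + L\hat D^{-1}$. The zero-column-sum property of $L$ forces each column of $\hat P$ to sum to $1$; taking $k$ large makes the diagonal entries of $\hat P$ positive; and irreducibility of $\hat P$ is inherited from that of $L$, so $\hat P$ is an irreducible stochastic matrix. A direct computation gives
\[k\mathbb{I}_n + F - V \;=\; \hat D + d_I L \;=\; \bigl((1 - d_I)\mathbb{I}_n + d_I \hat P\bigr)\hat D,\]
so Lemma~\ref{lem23}(a) applies with $\alpha = d_I$, and $\lambda^* = s(F-V) = s(\hat D + d_I L) - k$ is strictly decreasing in $d_I$ provided the $\beta_i - \gamma_i$ are not all equal (so that $\hat D$ is not a scalar multiple of $\mathbb{I}_n$; the degenerate case reduces to $\lambda^*\equiv\beta_i-\gamma_i$).

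For (b), evaluating at $d_I = 0$ gives $F - V = \diag\{\beta_i - \gamma_i\}$, whose spectral bound is $\max_i(\beta_i - \gamma_i)$ by inspection, and the claim follows from continuity of eigenvalues. For (c), I let $w(d_I) \gg \boldsymbol{0}$ denote the Perron eigenvector of $F - V$, normalized by $\boldsymbol{1}^T w(d_I) = 1$. Left-multiplying $(F - D + d_I L)\,w(d_I) = \lambda^* w(d_I)$ by $\boldsymbol{1}^T$ and using $\boldsymbol{1}^T L = \boldsymbol{0}$ (zero column sum) yields $\lambda^*(d_I) = \boldsymbol{1}^T(F-D)\,w(d_I)$. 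Dividing the eigenvalue identity by $d_I$ and passing to the limit $d_I \to \infty$ shows that any accumulation point $w_\infty$ of $w(d_I)$ lies in $\ker L$, which by Lemma~\ref{lem33}(b) is spanned by the strictly positive vector $\xi = (-1)^{n-1}(L^*_{11}, \dots, L^*_{nn})^T$; normalization forces $w_\infty = \xi/(\boldsymbol{1}^T \xi)$ and hence $\lambda^*(\infty) = \sum_i(\beta_i-\gamma_i)L^*_{ii}\big/\sum_i L^*_{ii}$.

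Parts (d) and (e) are then sign arithmetic combined with strict monotonicity. Since $(-1)^{n-1}L^*_{ii} > 0$ for every $i$ by Lemma~\ref{lem33}(b), the numerator and denominator of $\lambda^*(\infty)$ each carry a factor of $(-1)^{n-1}$ which cancels, so the sign of $\lambda^*(\infty)$ equals that of $\sum\nolimits_\beta - \sum\nolimits_\gamma$. In the low-risk regime this difference is negative, while the hypothesis $\R_0(0) = \max_i \R_0^{(i)} > 1$ forces some $\beta_i > \gamma_i$ and hence $\lambda^*(0) > 0$; strict monotonicity together with the intermediate value theorem then produces a unique $d_I^* \in (0, \infty)$ at which $\lambda^*(d_I^*) = 0$, and consequently $\det(F-V) = 0$. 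In the high-risk regime $\lambda^*(\infty) \ge 0$, so strict monotonicity forces $\lambda^*(d_I) > \lambda^*(\infty)\ge 0$ for every finite $d_I \ge 0$.

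The main obstacle is identifying the correct substitute for the matrix $\hat P$ used in Theorem~\ref{R0monotonicity}. There, stochasticity of $\hat P = \mathbb{I}_n + \tilde L \hat D^{-1}$ with $\tilde L = LF^{-1}$ relied on a somewhat delicate matching of column sums through $\hat D$; here, it is the zero-column-sum of $L$ itself (which $\tilde L$ does not inherit) that makes the simpler choice $\hat P = \mathbb{I}_n + L\hat D^{-1}$ stochastic, and recognizing this is the one nontrivial ingredient. After that, everything reduces to the monotonicity supplied by Lemma~\ref{lem23}, a direct asymptotic calculation using Lemma~\ref{lem33}, and elementary sign tracking.
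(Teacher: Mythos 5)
Your argument is correct, and for (a), (b), (d), (e) it coincides with the paper's own proof: the paper establishes (a) with exactly your factorization $k\mathbb{I}_n+F-V=\bigl((1-d_I)\mathbb{I}_n+d_I\hat P\bigr)\hat D$, $\hat D=k\mathbb{I}_n+F-D$, $\hat P=\mathbb{I}_n+L\hat D^{-1}$, followed by Lemma~\ref{lem23}, and it treats (b), (d), (e) as immediate consequences. Where you genuinely diverge is (c). The paper rewrites the Perron relation as $\tilde V^{-1}\boldsymbol{x}=\frac{1}{p-k-s(F-V)}\boldsymbol{x}$ with $\tilde V=((p-k)\mathbb{I}_n-F+D)-d_IL$ for a large shift $p$, passes along a sequence $d_n\to\infty$, and invokes the rank-one limit $\tilde V^{-1}_\infty$ (the lemma preceding Theorem~\ref{R0bounds}) before multiplying by $\boldsymbol{1}$; you instead use $\boldsymbol{1}^TL=\boldsymbol{0}$ directly to get the identity $\lambda^*(d_I)=\boldsymbol{1}^T(F-D)w(d_I)$ and show the normalized Perron vector can only accumulate at the kernel direction $\xi=(-1)^{n-1}(L^*_{11},\dots,L^*_{nn})^T$. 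Your route is more elementary (no auxiliary shifts $k,p$ and no limiting-inverse lemma), and as a by-product it gives the clean bound $\lambda^*\in[\min_i(\beta_i-\gamma_i),\max_i(\beta_i-\gamma_i)]$; its one extra ingredient is that $\ker L$ is one-dimensional, which Lemma~\ref{lem33}(b) does not quite state --- it only places $\xi$ in the kernel --- but which follows since $0=s(L)$ is a simple eigenvalue of the irreducible essentially nonnegative matrix $L$, so say this explicitly. Two further remarks. First, your caveat in (a) is a genuine improvement on the paper: assumption (B2) constrains the ratios $\beta_i/\gamma_i$, not the differences, so $F-D$ may be a scalar matrix even under (B2), in which case $\lambda^*\equiv\beta_i-\gamma_i$ is constant and strict monotonicity fails; the paper applies Lemma~\ref{lem23} without noting that $\hat D$ must not be a scalar multiple of $\mathbb{I}_n$ here. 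Second, your closing comment that $\tilde L=LF^{-1}$ does not inherit the zero column sums is incorrect ($\boldsymbol{1}^TLF^{-1}=\boldsymbol{0}$ since column scaling preserves zero column sums), but this is only side commentary and does not affect the proof.
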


\begin{proof}
Note that (b) is obvious, while (d) and (e) follow immediately from (a) and (c). Let us show the remaining two parts.

(a) Choose $k$ large enough so that all the diagonal entries of
$\hat{D}:=k\mathbb{I}_n+F-D$ are positive and that $\hat{P}:=\mathbb{I}_n+L\hat{D}^{-1}$ is an irreducible stochastic matrix. Recall that $V=D-d_I L$. For any $d_I\geq 0$, applying Lemma \ref{lem23} to the matrix
\[k\mathbb{I}_n+F-V=(k\mathbb{I}_n+F-D)+d_IL=\hat{D}+d_I(\hat{P}-\mathbb{I}_n)\hat{D}
=\left((1-d_I)\mathbb{I}_n+d_I \hat{P}\right)\hat{D}\]
gives that
\[\rho(d_I):=\rho(k\mathbb{I}_n+F-V)=s(k\mathbb{I}_n+F-V)=k+s(F-V)\]
is strictly decreasing in $d_I$. It follows that $\lambda^*$ strictly decreases as $d_I$ increases.

(c) For sufficiently large $k$, there exists a real vector $\boldsymbol{x}\gg0$ satisfying $x_1+\cdots+x_n=1$ such that
\[(k\mathbb{I}_n+F-V)\boldsymbol{x}=((k\mathbb{I}_n+F-D)+d_IL)\boldsymbol{x}
=\rho(d_I)\boldsymbol{x}=(k+s(F-V))\boldsymbol{x},\]
or equivalently,
\[(((p-k)\mathbb{I}_n-F+D)-d_IL)\boldsymbol{x}=(p-k-s(F-V))\boldsymbol{x},\ \forall\ p\in\mathbb{R}.\]
Denote $\tilde{V}=((p-k)\mathbb{I}_n-F+D)-d_IL$. For sufficiently large $p$ such that $(p-k)\mathbb{I}_n-F+D$ is a positive diagonal matrix, then
\begin{equation}\label{eqn24}
\begin{aligned}
\tilde{V}^{-1}\boldsymbol{x}=\dfrac{1}{p-k-s(F-V)}\boldsymbol{x}.
\end{aligned}
\end{equation}

We can pick up a sequence $\{d_n\}$ satisfying $0<d_1<\cdots<d_n<\cdots$ and $\lim\limits_{n\to \infty}d_n=\infty$ such that $\boldsymbol{x}(\infty):=\lim\limits_{n\to\infty}\boldsymbol{x}(d_n)$ exists. By taking $n\to\infty$, the equation \eqref{eqn24} gives
\[\tilde{V}^{-1}_{\infty}\boldsymbol{x}(\infty)=\dfrac{1}{p-k-s_{\infty}}\boldsymbol{x}(\infty),\]
which implies $s_{\infty}:=\lim\limits_{n\to\infty}s(F-V)$ exists. It follows that
\[\boldsymbol{1}\tilde{V}^{-1}_{\infty}\boldsymbol{x}(\infty)=\dfrac{1}{p-k-s_{\infty}}\boldsymbol{1}\boldsymbol{x}(\infty),\]
that is,
\[ \dfrac{\sum\limits_{i\in\Omega}L_{ii}^*}{\sum\limits_{i\in\Omega}(p-k-\beta_i+\gamma_i)L^*_{ii}}
\sum_{i\in\Omega}x_i(\infty)=\dfrac{1}{p-k-s_{\infty}}\sum_{i\in\Omega}x_i(\infty).\]
The proof is complete by solving $s_{\infty}$.
\end{proof}

\subsection{Limiting Endemic Equilibrium}

When $\R_0>1$, the model \eqref{sismodel} has at least one endemic equilibrium, denoted by
$$E^*:=(\boldsymbol{S}^*,\boldsymbol{I}^*)=(S_1^*,\dots,S_n^*,I_1^*,\dots,I_n^*),$$
which is a positive solution to
\begin{subequations}\label{EEeqn}
\begin{align}
& d_S\sum\limits_{j\in\Omega} L_{ij} S_j^*-\beta_i\dfrac{S_i^*I_i^*}{S_i^*+I_i^*}+\gamma_i I_i^*=0,\ i\in\Omega, \label{EEeqna}\\
& d_I\sum\limits_{j\in\Omega} L_{ij} I_j^*+\beta_i\dfrac{S_i^*I_i^*}{S_i^*+I_i^*}-\gamma_i I_i^*=0,\ i\in\Omega. \label{EEeqnb}
\end{align}
\end{subequations}
Previously, Allen et al. \cite{allen2007asymptotic} and Li and Peng \cite{li2019dynamics} studied the asymptotic behavior of the endemic equilibrium as $d_S\to 0$ and $d_I\to 0$, respectively. We will study the case of $d_I\to\infty$. Allen et al. \cite{allen2008asymptotic} and Peng \cite{peng2009asymptotic} considered similar problems for an SIS reaction-diffusion model.

\begin{theorem}
For model \eqref{sismodel}, assume $\R_0(\infty)=\lim\limits_{d_I\to \infty}\R_0(d_I)=\rho(FV_{\infty}^{-1})>1$ (i.e., a high-risk domain). Then the endemic equilibrium of model \eqref{sismodel} satisfies
\[E^*\to m(\hat{S}_1,\dots,\hat{S}_n,|L^*_{11}|,\dots,|L^*_{nn}|)\gg \boldsymbol{0},\ \mbox{ as }\ d_I\to\infty,\]
where $(\hat{S}_1,\dots,\hat{S}_n)$ is the unique positive solution of
\begin{equation*}
\begin{aligned}
d_S\sum\limits_{j=1}^n L_{ij}\hat{S}_j-\beta_i\frac{|L^*_{ii}|}{\hat{S}_i+|L^*_{ii}|}\hat{S}_i
+\gamma_i|L^*_{ii}|=0,\ i\in\Omega
\end{aligned}
\end{equation*}
and
\[m=\frac{\sum_{i=1}^n (S_i(0)+I_i(0))}{\sum_{i=1}^n \hat{S}_i+\sum_{i=1}^n |L_{ii}^*|}.\]
\end{theorem}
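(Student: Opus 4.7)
For every $d_I>0$, Theorem \ref{R0monotonicity} together with the hypothesis $\R_0(\infty)>1$ gives $\R_0(d_I)>1$, so an endemic equilibrium $E^*(d_I)$ exists. Conservation of total population $\sum_i(S_i^*(d_I)+I_i^*(d_I))=N$ confines the family $\{E^*(d_I)\}$ to a compact set. Fix any $d_I^{(k)}\to\infty$ and extract a convergent subsequence with limit $(\boldsymbol{S}^\infty,\boldsymbol{I}^\infty)$; the plan is to identify this limit explicitly, show it is subsequence-independent, and then invoke uniqueness to conclude that the full family converges.

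\textbf{Subsequential characterization.} Dividing \eqref{EEeqnb} by $d_I$ and letting $k\to\infty$ kills the bounded right-hand side and leaves $L\boldsymbol{I}^\infty=0$; by Lemma \ref{lem33}(b), $\boldsymbol{I}^\infty=c\,(|L^*_{11}|,\ldots,|L^*_{nn}|)^T$ for some $c\ge0$. Passing directly to the limit in \eqref{EEeqna} yields an equation for $\boldsymbol{S}^\infty$ in terms of $c$; provided $c>0$, the rescaling $\hat{S}_i:=S_i^\infty/c$ cancels the common factor $c$ and exactly reproduces the stated $\hat{S}$-equation. Conservation of population then forces $c\bigl(\sum_i\hat{S}_i+\sum_i|L^*_{ii}|\bigr)=N$, so $c=m$.

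\textbf{Ruling out $c=0$ --- the key use of $\R_0(\infty)>1$.} Summing \eqref{EEeqnb} over $i$ and using $\sum_iL_{ij}=0$ removes the migration term and leaves
\[ \sum_i \beta_i\,\frac{S_i^* I_i^*}{S_i^* + I_i^*} \;=\; \sum_i \gamma_i\,I_i^*. \]
If $\boldsymbol{I}^*\to\boldsymbol{0}$, set $\epsilon_k:=\sum_j I_j^*$ and normalize $\phi_i:=I_i^*/\epsilon_k$. Repeating the kernel argument with \eqref{EEeqnb} divided by $\epsilon_k$ produces $\phi^\infty$ proportional to $(|L^*_{11}|,\ldots,|L^*_{nn}|)^T$, while \eqref{EEeqna} now forces $\boldsymbol{S}^\infty=(N/\sum_i|L^*_{ii}|)(|L^*_{11}|,\ldots,|L^*_{nn}|)^T\gg\boldsymbol{0}$. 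Dividing the displayed balance by $\epsilon_k$ and passing to the limit (legal because $S_i^\infty>0$) yields $\sum_i\beta_i\phi_i^\infty=\sum_i\gamma_i\phi_i^\infty$, hence $\sum_i\beta_i|L^*_{ii}|=\sum_i\gamma_i|L^*_{ii}|$, which by Theorem \ref{R0bounds} reads $\R_0(\infty)=1$, contradicting the hypothesis.

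\textbf{Uniqueness of $\hat{S}$ and conclusion.} The remaining ingredient---and the main technical obstacle---is uniqueness of a strictly positive solution $\hat{S}$ of the reduced equation. This equation is precisely the $\boldsymbol{S}$-subsystem of \eqref{sismodel} with infected profile prescribed as $(|L^*_{11}|,\ldots,|L^*_{nn}|)^T$, and the associated vector field is cooperative and strictly sub-homogeneous on the positive cone; a monotone-iteration / sub- and super-solution argument in the spirit of Allen et al.\ \cite{allen2007asymptotic} therefore delivers a unique positive steady state. With uniqueness in hand, every convergent subsequence of $E^*(d_I)$ has the same limit $m(\hat{S}_1,\ldots,\hat{S}_n,|L^*_{11}|,\ldots,|L^*_{nn}|)$, so by a standard compactness argument the full family converges to this value as $d_I\to\infty$.
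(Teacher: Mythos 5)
Most of your architecture matches the paper's: compactness plus subsequential limits, identification of $\boldsymbol{I}^\infty$ through the kernel of $L$ and Lemma \ref{lem33}(b), reduction to the rescaled $\hat{S}$-equation, and the population constraint fixing $m$. Your way of ruling out $\boldsymbol{I}^*\to\boldsymbol{0}$ (summing \eqref{EEeqnb} over $i$, normalizing by $\epsilon_k=\sum_j I_j^*$, and deriving $\sum_i\beta_i|L^*_{ii}|=\sum_i\gamma_i|L^*_{ii}|$, i.e.\ $\R_0(\infty)=1$) is a legitimate and in fact more elementary alternative to the paper's argument, which instead writes \eqref{EEeqnb} as $(F^*-V)\boldsymbol{I}^*=0$, so $s(F^*-V)=0$, and contradicts this with $s(F-V)-\varepsilon\ge\lambda^*(\infty)-\varepsilon>0$ via Corollary \ref{spectralbound}.

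The genuine gap is the uniqueness step, which you yourself flag as the main obstacle and then dispatch with a claim that is false: the vector field $\hat f_i(\boldsymbol{S})=d_S\sum_j L_{ij}S_j-\beta_i a_iS_i/(S_i+a_i)+\gamma_i a_i$ (with $a_i=|L^*_{ii}|$) is cooperative and irreducible, but it is \emph{not} strictly sub-homogeneous on the positive cone. A direct computation gives, for $\lambda\in(0,1)$,
\[
\hat f_i(\lambda\boldsymbol{S})-\lambda\hat f_i(\boldsymbol{S})
=(1-\lambda)\,a_i\left[\gamma_i-\beta_i\,\frac{\lambda S_i^2}{(\lambda S_i+a_i)(S_i+a_i)}\right],
\]
and the bracket tends to $\gamma_i-\beta_i$ as $S_i\to\infty$. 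Under the high-risk hypothesis $\sum_i\beta_i a_i>\sum_i\gamma_i a_i$ there is at least one patch with $\beta_i>\gamma_i$, so sub-homogeneity fails on an unbounded region of the cone, and nothing in your argument confines the candidate solutions $\hat{\boldsymbol{S}}$ to the set where the inequality would hold; a ``monotone-iteration in the spirit of Allen et al.'' gives existence of ordered solutions, not uniqueness, without such structure. Since uniqueness is exactly what you need to make the limit subsequence-independent, the proof is incomplete at its decisive point. The paper closes this step differently: at any positive equilibrium of the auxiliary system \eqref{auxEEeqn} the Jacobian is $d_SL-\diag\{\beta_i a_i^2/(\hat S_i+a_i)^2\}$, whose spectral bound is strictly negative, so \emph{every} positive equilibrium is locally asymptotically stable; the connecting-orbit theory for monotone systems (Hess) then excludes two such equilibria, and Jiang's theorem upgrades the unique one to global asymptotic stability. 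Replacing your sub-homogeneity claim with this stability-plus-connecting-orbit argument (or any other valid uniqueness proof) is required to complete the proof.
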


\begin{proof}
It is clear that each entry of the endemic equilibrium $E^*$
is bounded for any $d_I>0$. So, we have (up to a sequence of $d_I$)
$$E^*\to \tilde{E}:=(\tilde{\boldsymbol{S}},\tilde{\boldsymbol{I}})
=(\tilde{S}_1,\dots,\tilde{S}_n,\tilde{I}_1,\dots,\tilde{I}_n)\ge\boldsymbol{0}\ \mbox{ as } d_I\to\infty.$$
Following equation \eqref{EEeqnb} and the irreducibility of $L$, we know either $\tilde{\boldsymbol{I}}=\boldsymbol{0}$ or $\tilde{\boldsymbol{I}}\gg\boldsymbol{0}$.

Suppose $\tilde{\boldsymbol{I}}=\boldsymbol{0}$, then the equation \eqref{EEeqna} indicates that $\tilde{\boldsymbol{S}}=\boldsymbol{S}^0$ and hence $E^*\to\tilde{E}=E_0$ as $d_I\to\infty$. It follows from $\R_0(\infty)=\lim\limits_{d_I\to \infty}\R_0(d_I)>1$ and Corollary \ref{spectralbound} that $\lambda^*(\infty)=\lim\limits_{d_I\to\infty}\lambda^*(d_I)>0$. By choosing $\varepsilon\in(0,\lambda^*(\infty))$, there is a $\tilde{d}_I>0$ so that
$$\beta_i(1-S^*_i/(S^*_i+I^*_i))<\varepsilon,\ i\in\Omega,$$
for $d_I>\tilde{d}_I$. Denote $F^*=\diag\{\beta_1S^*_1/(S^*_1+I^*_1),\dots,\beta_nS^*_n/(S^*_n+I^*_i)\}$. The equation \eqref{EEeqnb} can be rewritten in a matrix form
\[(F^*-V)(\boldsymbol{I}^*)^T=\boldsymbol{0},\]
which implies \[s(F^*-V)=0.\]
On the other hand, for $d_I>\tilde{d}_I$, it follows from
\begin{equation*}
\begin{aligned}
&F^*-V>\diag\{\beta_1-\varepsilon,\dots,\beta_n-\varepsilon\}-V=F-V-\varepsilon\mathbb{I}_n
\end{aligned}
\end{equation*}
that
\begin{equation*}
\begin{aligned}
&s(F^*-V)>&s(F-V)-\varepsilon=\lambda^*(d_I)-\varepsilon>\lambda^*(\infty)-\varepsilon>0,
\end{aligned}
\end{equation*}
which results in a contradiction. This means that $\tilde{\boldsymbol{I}}\gg\boldsymbol{0}$.

The boundedness of
\[\beta_i\frac{\tilde{I}_i}{\tilde{S}_i+\tilde{I}_i}\tilde{S}_i
-\gamma_i\tilde{I}_i,\ i\in\Omega\]
implies
\begin{equation*}
\begin{aligned}
\sum\limits_{j=1}^n L_{ij}\tilde{I}_j=0,\ i\in\Omega.
\end{aligned}
\end{equation*}
Hence, the limiting endemic equilibrium $\tilde{E}$ is a solution of the system of $2n$ equations
\begin{subequations}\label{tlimeqn}
\begin{align}
& d_S\sum\limits_{j=1}^n L_{ij}\tilde{S}_j-\beta_i\frac{\tilde{I}_i}{\tilde{S}_i+\tilde{I}_i}\tilde{S}_i
+\gamma_i\tilde{I}_i=0,\ 1\le i\le n,\label{tlimeqna}  \\
& \sum\limits_{j=1}^n L_{ij}\tilde{I}_j=0,\ 1\le i\le n-1, \label{tlimeqnb}\\
& \sum\limits_{i=1}^n (\tilde{S}_i+\tilde{I}_i)=\sum\limits_{i=1}^n (S_i(0)+I_i(0)). \label{tlimeqnc}
\end{align}
\end{subequations}
Solving \eqref{tlimeqnb}
gives
\[(\tilde{I}_1,\dots,\tilde{I}_n)=m(-1)^{n-1}(L^*_{11},\dots,L^*_{nn})
=m(|L^*_{11}|,\dots,|L^*_{nn}|),\ m>0\]
and substituting it into \eqref{tlimeqna} and \eqref{tlimeqnc} yields
\begin{equation}\label{mEEeqn}
\begin{aligned}
d_S\sum\limits_{j=1}^n L_{ij}\tilde{S}_j-\beta_i\frac{m|L^*_{ii}|}{\tilde{S}_i+m|L^*_{ii}|}\tilde{S}_i
+\gamma_im|L^*_{ii}|=0,\ i\in\Omega,
\end{aligned}
\end{equation}
and
\[m=\frac{\sum_{i=1}^n (S_i(0)+I_i(0))-\sum_{i=1}^n \tilde{S}_i}{\sum_{i=1}^n |L_{ii}^*|},\]
respectively. Denote $\hat{S}_i=\tilde{S}_i/m$ for $i\in\Omega$. The equation \eqref{mEEeqn} can be rewritten as
\begin{equation}\label{hEEeqn}
\begin{aligned}
d_S\sum\limits_{j=1}^n L_{ij}\hat{S}_j-\beta_i\frac{|L^*_{ii}|}{\hat{S}_i+|L^*_{ii}|}\hat{S}_i
+\gamma_i|L^*_{ii}|=0,\ i\in\Omega.
\end{aligned}
\end{equation}
Consider the following auxiliary system
\begin{equation}\label{auxEEeqn}
\begin{aligned}
\frac{d\hat{S}_i}{dt}=d_S\sum\limits_{j=1}^n L_{ij}\hat{S}_j-\beta_i\frac{|L^*_{ii}|}{\hat{S}_i+|L^*_{ii}|}\hat{S}_i
+\gamma_i|L^*_{ii}|,\ i\in\Omega,
\end{aligned}
\end{equation}
which is dissipative, cooperative and irreducible in $\mathbb{R}^n_+$. Let $\hat{\boldsymbol{f}}$ denote the vector field described by \eqref{auxEEeqn}. Following $\hat{\boldsymbol{f}}(\boldsymbol{0})\gg\boldsymbol{0}$ and Theorem 3.2.1 in Smith \cite{smith1995monotone}, the solution starting at the origin converges to a positive equilibrium $\omega(\boldsymbol{0})$. It is easy to check that every positive equilibrium of system \eqref{auxEEeqn} is locally asymptotically stable by computing the associated Jacobian matrix. By the theory of connecting orbits \cite{hess1991periodic}, the system \eqref{auxEEeqn} cannot have more than one positive equilibrium. Furthermore, Theorem C in Jiang \cite{jiang1994global} implies that the unique positive equilibrium $\omega(\boldsymbol{0})$ is globally asymptotically stable.

Once the equation \eqref{hEEeqn} is solved, we can then obtain
\[\tilde{S}_i=m\hat{S}_i\ \mbox{ and }\ \tilde{I}_i=m|L^*_{ii}|,\]
where
\[m=\frac{\sum_{i=1}^n (S_i(0)+I_i(0))}{\sum_{i=1}^n \hat{S}_i+\sum_{i=1}^n |L_{ii}^*|}.\]
The existence and uniqueness of the positive solution of \eqref{hEEeqn} implies the convergence of $E^*$ as $d_I\to\infty$.
\end{proof}

An easy way to calculate $\sum\limits_{i\in\Omega}\beta_i L_{ii}^*$, $\sum\limits_{i\in\Omega}\gamma_i L_{ii}^*$ and $\sum\limits_{i\in\Omega}L_{ii}^*$ is through the Laplace expansion
\begin{equation*}
\begin{aligned}
\begin{vmatrix}
x_1 & x_2 & \cdots & x_n \\
L_{21} & L_{22}  & \cdots & L_{2n} \\
\vdots & \vdots &  \ddots & \vdots \\
L_{n1} & L_{n2}  & \cdots & L_{nn} \\
\end{vmatrix}=\sum\limits_{i\in\Omega}x_iL^*_{ii}.
\end{aligned}
\end{equation*}
The above-mentioned analysis can be adopted to some other epidemic patch models in studying the monotonicity, convexity and asymptotic properties of the basic reproduction number and spectral bound which serve as threshold quantities between disease persistence and extinction \cite{auger2008ross,gao2011sis,li2009global,tien2015disease}.

\section{Discussion}

It is clear that for SIS epidemic reaction-diffusion models the basic reproduction number is a monotone decreasing function of the diffusion coefficient for the infected population (e.g., Allen et al. \cite{allen2008asymptotic}, Deng and Wu \cite{deng2016dynamics}, Li et al. \cite{li2017varying}). However, the dependence of $\R_0$ on $d_I$ for SIS epidemic patch models was generally unknown \cite{allen2007asymptotic,gao2019travel}. In this paper, by applying some recent advances in the spectral theory of linear operators \cite{altenberg2009evolutionary,altenberg2012resolvent}, we show that $\R_0$ for the SIS epidemic patch model remains strictly decreasing in $d_I$ regardless of the symmetry of the connectivity matrix. Moreover, the first and second derivatives of $\R_0$ with respect to $d_I$ are strictly negative and strictly positive for all $d_I>0$, respectively. Based on the approach and results, an improved and reachable lower bound of $\R_0$, a generalized monotone result on the spectral bound of $F-V$ and the limiting endemic equilibrium as $d_I\to\infty$ are obtained. 

The present work are applicable to epidemic patch models in which exactly one class of infected compartments migrate between patches and one transmission route is involved. In other words, the next generation matrix can be written in the form of $FV^{-1}=F(D-d_IL)^{-1}$ where $F$ and $D$ are positive diagonal matrices and $L$ is an essentially nonnegative irreducible matrix with zero column sum. For example, it works for an SIS patch model with bilinear incidence \cite{wang2004epidemic}, the SIS patch model with media effect in Gao and Ruan \cite{gao2011sis}, SIR or SIRS patch model \cite{li2009global}, SEIRS patch model in the absence of diffusion for infectious subpopulation \cite{salmani2006model}, the multipatch cholera model studied by Tien et al. \cite{tien2015disease}, and a Ross-Macdonald type malaria model with human movement analyzed by Auger et al. \cite{auger2008ross} and Cosner et al. \cite{cosner2009effects}. These suggest that diffusion can help accelerate the elimination of infectious diseases.
 
The asymmetric movement in patch models can be viewed as advection-diffusion, so it is not surprising that the basic reproduction number of the SIS model of reaction-diffusion-advection type considered by Cui and Lou \cite{cui2016spatial} is also monotone decreasing in the diffusion coefficient for the infected population $d_I$ if the advection rate is proportional to $d_I$. It is worth mentioning that based on a cholera model Tien et al. \cite{tien2015disease} derived the limit of $\R_0(d_I)$ as $d_I\to\infty$ and found that the difference of $\R_0(d_I)$ and its limit is an infinitesimal of the same order as $1/d_I$ through a Laurent series expansion. The strict monotonicity of $\R_0$ with respect to $d_I$ may fail when the SIS patch model \eqref{sismodel} is extended to a multigroup-multipatch model (Example 4.3 in Gao \cite{gao2019travel}), an SEIRS reaction-diffusion model \cite{song2019spatial}, an SIS reaction-diffusion periodic model (Theorem 2.5e in Peng and Zhao \cite{peng2012reaction}), a periodic patch model (it is easy to find a counterexample by using the constructive method in Peng and Zhao \cite{peng2012reaction}), or a reaction-diffusion model with advection (Theorem 1.4 in Cui and Lou \cite{cui2016spatial}). The influence of diffusion on disease persistence is strongly affected by model structures and model formulations and further investigations are required.

\section*{Acknowledgements}

This study was partially supported by NSFC (11601336, 11571097), Program for Professor of Special Appointment (Eastern Scholar) at Shanghai Institutions of Higher Learning (TP2015050), and Shanghai Gaofeng Project for University Academic Development Program. We sincerely thank Drs. Lee Altenberg, Jifa Jiang, Yuan Lou and Gilbert Strang for their valuable discussions and comments.

\end{document}